\documentclass[runningheads]{llncs}
\usepackage[T1]{fontenc}

\usepackage{graphicx}
\usepackage{amsmath, amssymb}
\usepackage{enumitem}
\usepackage{hyperref}
\usepackage{color}
\usepackage{comment}
\usepackage{wrapfig}

\begin{document}
\title{Physics of Information Geometry - Part II: Small-Step Active Inference on the Probability Simplex}
%\title{Physics of Information Geometry:\\ Greedy Optimality and Lagrangian Dynamics\\on the Probability Simplex}
%
%\titlerunning{Physics of Information Geometry}
%
%\begin{comment}
\author{C.~Emre Koksal\inst{1}%\orcidID{0000-0001-8463-4446}
\and Deniz Sargun\inst{2}\thanks{This work does not relate to the author' s position at Amazon.}
%\orcidID{0000-0003-2417-7495}
%
\authorrunning{Koksal and Sargun}
\institute{The Ohio State University, Columbus OH 43210, USA,
\email{koksal.2@osu.edu} \and
Amazon.com Inc., Palo Alto CA 94301, USA, \email{denizsargun@gmail.com}}}
%
%\end{comment}
%\author{Anonymous Author(s)}
%
%\authorrunning{Koksal and Sargun}
%
%\institute{\vspace{-0.2in}}
%

\maketitle
%
%\vspace{-0.3in}
\begin{abstract}

This paper is the second in a two-part investigation of the physics of information geometry. While Part I develops a physical foundation for distributional motion on the probability simplex, the present paper studies how that framework manifests in active inference. The treatment is fully self-contained and does not require familiarity with Part I. We focus in particular on active inference through small distributional steps and the geometric structure induced by such local motion.
%In Part II, we study the geometry of active inference on the probability simplex.
Starting from an initial distribution, an agent evolves its belief state toward a final target distribution through a sequence of constrained updates. We define a relative free energy functional with respect to the preferred distribution and extend it to a relative potential energy analogous to the Helmholtz/Gibbs free-energy decomposition. The evolution is subject to a per-step kinetic constraint expressed through the Kullback-Leibler (KL) divergence between consecutive distributions, which serves as a discrete kinetic energy on the probability simplex. Using the information-geometric Pythagorean theorem on KL balls, we show that sufficiently small local moves dominate large direct jumps, and that greedy maximization of free-energy reduction is globally optimal under the kinetic constraint. This leads to a sequential variational principle in which the optimal trajectory minimizes the associated Lagrangian of the optimization problem. Similar to classical mechanics, the Lagrangian takes on the form as the difference between the kinetic and potential terms, establishing a least-action principle for distributional motion on the simplex. The resulting optimal update admits a closed form as an exponentially tilted version of the current distribution toward the preferred distribution, parametrized by an inverse-temperature-like multiplier. We further extend the framework to incorporate state-dependent geodesic costs and show that these costs are equivalent to exponential tilting of the preferred distribution itself. The proposed framework reveals a unique thermodynamic and geometric structure underlying belief evolution, connecting Friston's free-energy principle, information geometry, variational inference, and optimal transport through the physics of least-action paths on probability spaces.
\end{abstract}
\keywords{Active inference \and Information geometry \and
Free energy principle \and Kullback--Leibler divergence \and
Probability simplex \and Least Action Principle \and Pythagorean theorem}

\section{Introduction}

A fundamental question underlying active inference, learning, and adaptive decision making is how a system moves through the probability simplex from one distribution to another through a sequence of actions. In active inference~\cite{friston2010free,friston2006free}, an agent selects actions to minimize expected free energy while simultaneously steering its variational distribution toward preferred states. In diffusion-based generative models \cite{Denoising}, a system evolves from a nearly uniform or Gaussian distribution toward a structured target distribution representing an image, signal, or latent representation. Similar distributional evolution problems arise in stochastic control, Bayesian filtering, reinforcement learning, and nonequilibrium thermodynamics.  In all of these cases, the central object is not merely the final state itself, but rather the \emph{path} taken through the probability simplex.  This naturally raises the following question: \emph{what is the optimal sequence of distributions that transports a system from an initial point to a desired terminal distribution?}

Building on the thermodynamic and information-geometric foundations developed in Part I \cite{part1}, in this paper we develop a variational framework for active-inference-driven distributional motion on the probability simplex and establish its connection to the classical \textbf{principle of least action}. Starting from an initial distribution, we define a notion of \emph{relative free energy} with respect to a preferred distribution and extend it to a \emph{relative potential energy} by incorporating entropy terms in a form analogous to the Helmholtz/Gibbs free-energy decomposition~\cite{landau1980statistical}.  We then define a kinetic-energy-like quantity associated with each step of motion using the Kullback--Leibler (KL) divergence~\cite{kullback1951information,cover2006elements} between successive distributions.  Using the information-geometric Pythagorean theorem~\cite{csiszar2004information,csiszar1975divergence}, we show that taking a large leap toward the target is fundamentally suboptimal relative to taking sufficiently small incremental steps.  This leads to a \textbf{constrained sequential optimization problem} in which the kinetic energy of each step is bounded.  We prove that the resulting global problem reduces to minimizing the accumulated difference between potential and kinetic energy along the path, thereby establishing a least-action principle on the probability simplex.

Our development differs fundamentally from existing formulations in information geometry, active inference, and optimal transport. Classical information geometry~\cite{amari2016information,rao1945information,cencov1982statistical} studies geodesics induced by the Fisher metric and characterizes shortest paths on the simplex through differential geometric constructions. Schrödinger bridge problems and entropy-regularized optimal transport~\cite{benamou2000computational,villani2009optimal} construct stochastic interpolations between endpoint distributions through path-space entropy minimization.  Active inference, in contrast, minimizes expected free energy at each step in order to generate actions for preferred outcomes as well as minimum surprise~\cite{parr2022active,dacosta2020active,devries2025efe}. Closely related, KL-cost stochastic optimal control and linearly-solvable MDPs yield an exponentially tilted optimal policy of the same functional form as our update~\cite{kappen2005path,todorov2006linearly,levine2018rl}, where a state-dependent cost is absorbed by exponentially tilting a reference distribution. We have a different objective in this paper: rather than prescribing a local variational objective or a stochastic transport interpolation, we derive a sequential variational principle for \textit{distributional motion itself}.  \textbf{In our framework, free energy, entropy, and KL-based kinetic cost combine into an action functional whose minimization determines the optimal path toward a preferred distribution.}

The key mechanism enabling this construction is an information-geometric decomposition based on the Pythagorean theorem \cite{csiszar2004information,blei2017variational} for relative entropy.  Let \(q_{t-1}\) denote the current point on the simplex and let \(p_{\mathrm{pref}}\) denote the preferred distribution.  We consider the optimal next step \(q_t^\star\) constrained to lie within a small KL ball around \(q_{t-1}\). This construction is a slight variation of the KL ball introduced in Part I, where the balls were defined along a backward-in-time construction and the free-energy notion was aligned with its thermodynamic interpretation relative to Gibbs equilibrium, rather than with the Friston-type variational free energy used in the present paper. The Pythagorean relation shows that the free-energy reduction obtained by moving from \(q_{t-1}\) to \(q_t^\star\) dominates the corresponding KL step size, implying that sufficiently small local moves are more efficient than large global jumps.  Moreover, as the step size approaches zero, the reduction in free energy converges to the KL divergence between consecutive distributions up to higher-order terms.  This establishes the KL divergence as \textit{kinetic-energy} on the simplex and motivates the introduction of a least-action path formulation.

We then formulate the sequential path optimization problem explicitly.  Given an initial distribution\footnote{We use a slightly different notation from Part I, since the initial distribution in the present formulation is not required to be a Gibbs equilibrium distribution.} \(\pi_0\) and a preferred distribution \(p_{\mathrm{pref}}\), the objective is to maximize the total reduction in relative free energy subject to a per-step kinetic-energy constraint.  We show that the problem admits a greedy optimal structure in which each infinitesimal step locally maximizes free-energy reduction.  This converts the original constrained optimization into the minimization of an accumulated action functional over the entire trajectory.  The resulting optimal update admits a closed-form characterization: at each step, the new distribution is obtained by exponentially tilting the current distribution toward the preferred distribution.  Thus, the least-action path evolves through a sequence of progressively tilted distributions converging toward the destination.

Finally, we extend the framework to incorporate geodesic or state-dependent costs on the simplex.  These costs represent %energetic penalties, environmental constraints, or 
preferences associated with occupying specific regions of the simplex.  We show that introducing such costs is equivalent to exponentially tilting the preferred distribution itself by the corresponding cost function.  Consequently, the generalized problem reduces to the original least-action formulation under a transformed preferred distribution.  This reveals a deep structural equivalence between external geometric costs and exponential tilting in information geometry.

\section{Model}

We consider sequential motion on the probability simplex
\[
\Delta(\mathcal X)
=
\left\{
q \in \mathbb R_{\ge 0}^{|\mathcal X|}
:
\sum_{x\in\mathcal X} q(x)=1
\right\}, \]
where an agent evolves its belief state from an initial equilibrium
distribution toward a preferred target distribution through a sequence
of actions.  Throughout the paper, \(\pi_0\) denotes the initial arbitrary distribution and \(p_{\mathrm{pref}}\) denotes the preferred
distribution that the agent seeks to reach.  The sequence
\(\{q_t\}_{t\ge 0}\) describes the trajectory of the agent on the
probability simplex.

\begin{comment}
Our objective is not merely to characterize the final distribution,
but rather to understand the \emph{physics of the path itself}.
To this end, we introduce analogues of free energy, potential energy,
and kinetic energy directly on the simplex.  These quantities reveal a
variational structure underlying belief evolution and eventually lead
to a least-action principle for optimal motion between probability
distributions.
\end{comment}

%\subsection{Relative Free Energy, Potential Energy, and Kinetic Energy}

Next, we provide the fundamental definitions on the probability simplex. We begin with the notion of relative free energy.

\vspace{-0.05in}

\begin{definition}[Relative free energy]
The relative free energy of the current distribution \(q_t\)
with respect to the equilibrium distribution \(\pi_0\)
and preferred distribution \(p_{\mathrm{pref}}\) is\footnote{From Part I to the present paper, we modify the notation for relative free energy to emphasize the fundamental distinction between the thermodynamic and active-inference formulations, in particular the reversal in the ordering of the distributions in the KL divergence.}
\begin{equation}
F_{\mathrm{rel}}^{\mathrm{pref}}(q_t\|\pi_0)
=
D_{\mathrm{KL}}(\pi_0\|p_{\mathrm{pref}})
-
D_{\mathrm{KL}}(q_t\|p_{\mathrm{pref}}).
\label{eq:relative_free_energy}
\end{equation}
\end{definition}
The interpretation of \eqref{eq:relative_free_energy} is important.
The quantity \(D_{\mathrm{KL}}(q_t\|p_{\mathrm{pref}})\) is itself
Friston's variational free energy relative to the preferred state.
Instead of measuring free energy absolutely, however, we measure it
\emph{relative to the initial state} \(\pi_0\).  The term
\(D_{\mathrm{KL}}(\pi_0\|p_{\mathrm{pref}})\) therefore represents the
initial free-energy gap between the initial and the preferred state,
while \(F_{\mathrm{rel}}^{\mathrm{pref}}(q_t\|\pi_0)\) quantifies how
much of that gap has been reduced by moving from \(\pi_0\) to \(q_t\).
The relative free energy thus measures progress toward the preferred
distribution in information-geometric terms.

%Next, we define the corresponding relative potential energy.
\vspace{-0.05in}

\begin{definition}[Relative potential energy]
The relative potential energy of a distribution \(q_t\) with respect to
the equilibrium distribution \(\pi_0\) and preferred distribution
\(p_{\mathrm{pref}}\) is
\begin{equation}
P_{\mathrm{rel}}^{\mathrm{pref}}(q_t\|\pi_0)
\triangleq
H(\pi_0\|p_{\mathrm{pref}})
-
H(q_t\|p_{\mathrm{pref}}),
\label{eq:relative_potential}
\end{equation}
where
\(
H(q\|p) = -\sum_{x\in\mathcal X} q(x)\log p(x)
\)
denotes the cross-entropy.% between \(q\) and \(p\).
\end{definition}

The definition above is motivated by the classical
Helmholtz/Gibbs free-energy decomposition~\cite{jaynes1957information}. Indeed, in thermodynamics,
free energy combines an energetic contribution and an entropic
contribution:
\( D_{\mathrm{KL}}(q\|p) = H(q\|p)-H(q). \)  Here,
\begin{comment}
the cross-entropy \( H(q_t\|p_{\mathrm{pref}}) \)
plays the role of a generalized potential energy relative to the
preferred distribution.  The quantity
\(
H(\pi_0\|p_{\mathrm{pref}})
\)
therefore represents the initial total potential associated with the
equilibrium distribution, while
\end{comment}
\( P_{\mathrm{rel}}^{\mathrm{pref}}(q_t\|\pi_0) \)
measures the reduction in this potential achieved by moving from
\(\pi_0\) to \(q_t\).  In this sense, the relative potential energy
quantifies how much of the initial energetic mismatch to the preferred
distribution has been eliminated.
\begin{comment}
The definition is also closely connected to the decomposition of
relative free energy.  Recalling that
\(
D_{\mathrm{KL}}(q\|p)
=
H(q\|p)-H(q),
\)
we may write
\begin{equation}
P_{\mathrm{rel}}^{\mathrm{pref}}(q_t\|\pi_0) = 
%&= H(\pi_0\|p_{\mathrm{pref}}) - H(q_t\|p_{\mathrm{pref}}) 
%\nonumber\\
%&= D_{\mathrm{KL}}(\pi_0\|p_{\mathrm{pref}}) + H(\pi_0) - \Big( D_{\mathrm{KL}(q_t\|p_{\mathrm{pref}}) + H(q_t) \Big)
%\nonumber\\
%&= 
F_{\mathrm{rel}}^{\mathrm{pref}}(q_t\|\pi_0) + H(\pi_0)-H(q_t).
\label{eq:potential_free_energy_relation}
\end{equation}
Equation \eqref{eq:potential_free_energy_relation} reveals the precise
relationship between relative free energy and relative potential
energy.  The two differ only by the entropy change between the current
distribution and the initial equilibrium distribution.  Consequently,
\end{comment}
The relative free energy may be interpreted as the portion of the
potential energy that remains after accounting for the entropy of the
current state.

\begin{comment}

An important special case occurs when the equilibrium distribution
\(\pi_0\) is uniform.  In that case,
\[
H(\pi_0)-H(q_t)
=
D_{\mathrm{KL}}(q_t\|\pi_0),
\]
and therefore
\begin{equation}
P_{\mathrm{rel}}^{\mathrm{pref}}(q_t\|\pi_0)
=
F_{\mathrm{rel}}^{\mathrm{pref}}(q_t\|\pi_0)
+
D_{\mathrm{KL}}(q_t\|\pi_0).
\label{eq:uniform_relation}
\end{equation}

Thus, when the system starts from a maximally uncertain equilibrium,
the additional contribution between potential energy and free energy
is exactly the informational distance from the current distribution to
the equilibrium point itself.

We next define the kinetic energy associated with motion on the
probability simplex.
\end{comment}

\vspace{-0.05in}

\begin{definition}[Kinetic energy]
The kinetic energy of a transition from \(q_{t-1}\) to \(q_t\) is
defined as
\begin{equation}
K(q_t,q_{t-1})
=
D_{\mathrm{KL}}(q_t\|q_{t-1})
=
\sum_{x\in\mathcal X}
q_t(x)\log\frac{q_t(x)}{q_{t-1}(x)}.
\label{eq:kinetic_energy}
\end{equation}
\end{definition}
The KL divergence between consecutive distributions measures the amount
of informational motion required to move from one point on the simplex
to another.  For infinitesimal changes, the KL divergence reduces to
the Fisher--Rao metric~\cite{rao1945information,cencov1982statistical} and therefore acts as the natural local notion
of squared speed on the simplex.  In this sense,
\eqref{eq:kinetic_energy} plays the role of a discrete kinetic energy:
large abrupt changes in belief incur high kinetic cost, while smooth
evolution corresponds to low kinetic expenditure.
The asymmetry of the KL divergence is also important here.
The quantity \(D_{\mathrm{KL}}(q_t\|q_{t-1})\) measures the cost of
\emph{selecting} the new distribution \(q_t\) relative to the previous
belief state \(q_{t-1}\), making it the natural directional notion of motion for sequential decision making. In nonequilibrium statistical mechanics, the same quantity has been identified with dissipated work, providing an independent physical justification for treating KL divergence as an energetic cost of distributional motion \cite{kawai2007dissipation}.

All of the information-geometric quantities defined above are dimensionless.  Multiplication by \(k_B \tau\) assigns them physical energy units. A direct 
thermodynamic interpretation requires additional structure: if the final distribution \(p_{\mathrm{pref}}\) is identified with an equilibrium Gibbs distribution at temperature \(\tau\), then
\(
k_B \tau\, D_{\mathrm{KL}}(q\|p_{\mathrm{pref}})
\)
corresponds to the nonequilibrium free-energy excess of \(q\) relative to that Gibbs state. Under this interpretation, the relative free- and potential-energy  quantities defined above acquire their corresponding physical energy meanings. Likewise,
\(
k_B \tau\, D_{\mathrm{KL}}(q_t\|q_{t-1})
\)
has units of energy, although its interpretation as kinetic energy remains information-geometric rather than conventional mechanical kinetic energy.
\section{Main Insight: Information-Geometric Pythagorean Theorem and Incremental Motion}

In this section, we establish the fundamental geometric principle
underlying the proposed framework.  The key observation is that,
on the probability simplex, moving toward a preferred distribution
through sufficiently small local steps is fundamentally more efficient
than taking large direct jumps.  This result follows directly from the
information-geometric Pythagorean theorem for relative entropy and
provides the basis for the least-action formulation developed in the
subsequent sections.

\begin{comment}
\begin{wrapfigure}{r}{2.8in}
\vspace{-0.4in}
\centerline{\includegraphics[width=2.8in]{pythagorean_simplex.png}}
\vspace{-0.16in}
\caption{\small Illustration of the information-geometric Pythagorean theorem on the probability simplex.  The admissible next-step distributions lie inside the KL ball
\(
\mathcal B_\delta(q_{t-1})
\).
The optimal incremental step \(q_t^\star\) is the projection of the preferred distribution \(p_{\mathrm{pref}}\) onto this ball.  The resulting decomposition shows that the reduction in free energy is at least as large as the informational step size.}
\label{fig:pythagorean}
\vspace{-0.3in}
\end{wrapfigure}
\end{comment}
\begin{figure}[t]
\centering
\includegraphics[width=0.6\textwidth]{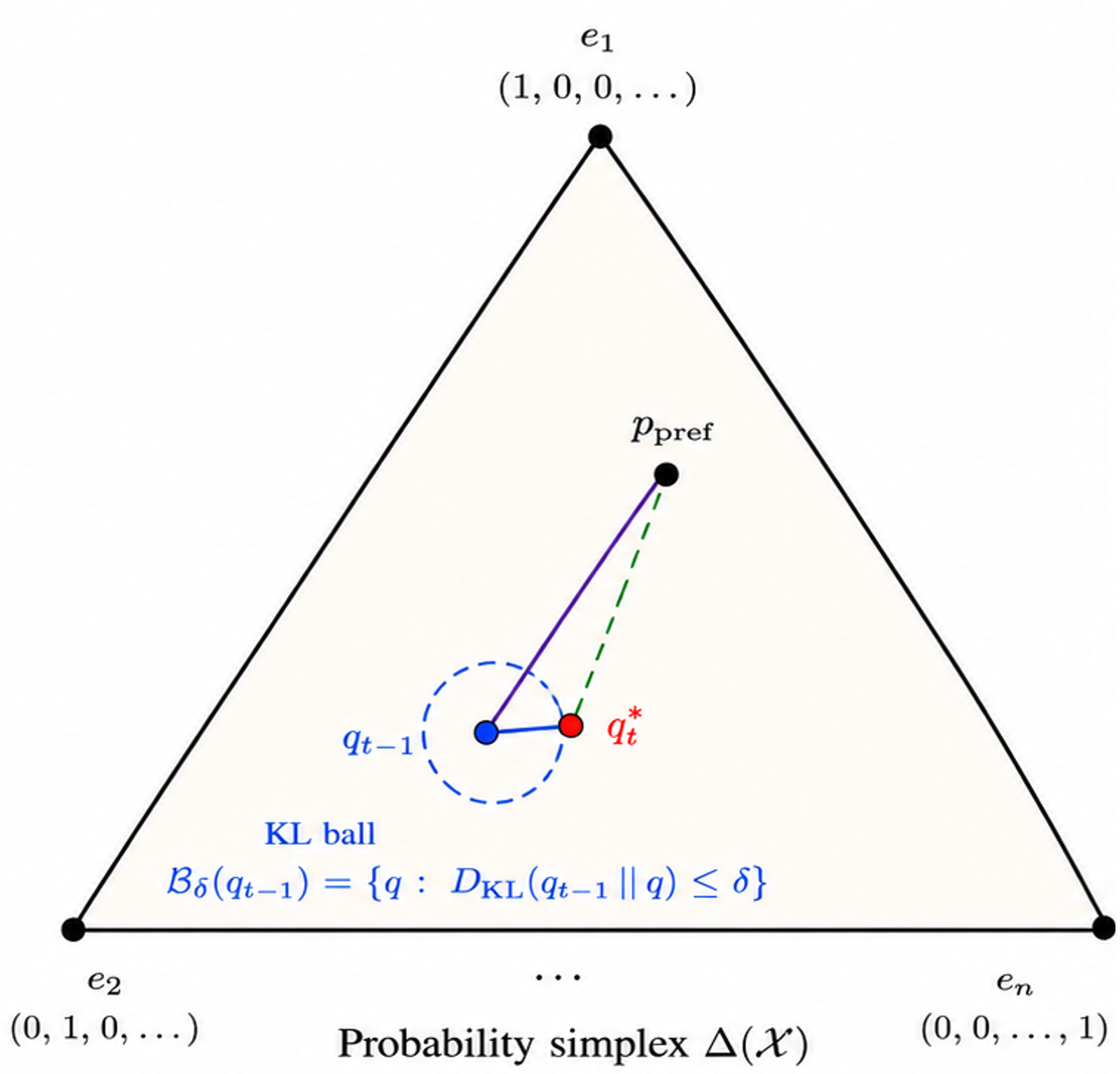}
\vspace{-0.1in}
\caption{
Illustration of the information-geometric Pythagorean theorem on the
probability simplex.  The admissible next-step distributions lie
inside the KL ball
\(
\mathcal B_\delta(q_{t-1})
\).
The optimal incremental step \(q_t^\star\) is the projection of the
preferred distribution \(p_{\mathrm{pref}}\) onto this ball.  The
resulting decomposition shows that the reduction in free energy is at
least as large as the informational step size.
}
\label{fig:pythagorean}
\vspace{-0.1in}
\end{figure}

\subsection{KL Balls and Optimal Incremental Motion}

Fix the current distribution \(q_{t-1}\) and preferred distribution
\(p_{\mathrm{pref}}\).  Consider the KL ball
\begin{equation}
\mathcal B_\delta(q_{t-1})
=
\left\{
q \in \Delta(\mathcal X)
:
D_{\mathrm{KL}}(q_{t-1}\|q)\le \delta
\right\},
\label{eq:kl_ball}
\end{equation}
which represents the set of all admissible next-step distributions
reachable under the kinetic-energy budget \(\delta\).

We define the optimal next step as
\begin{equation}
q_t^\star
=
\underset{q\in\mathcal B_\delta(q_{t-1})}{\text{argmin}} \ ~
D_{\mathrm{KL}}(q\|p_{\mathrm{pref}}).
\label{eq:optimal_projection}
\end{equation}
Geometrically, \(q_t^\star\) is the information projection of
\(p_{\mathrm{pref}}\) onto the KL ball centered at \(q_{t-1}\).  This
construction is illustrated conceptually in Fig.~\ref{fig:pythagorean}.
The figure depicts the simplex, the preferred distribution
\(p_{\mathrm{pref}}\), the current distribution \(q_{t-1}\), and the
KL ball of admissible next states.  The point \(q_t^\star\) lies on
the boundary of the ball in the direction of the preferred
distribution and corresponds to the locally optimal incremental move. The following theorem is the central geometric observation.

\begin{comment}
%\vspace{-0.2in}
\begin{figure}[t]
\centering
\includegraphics[width=0.6\textwidth]{pythagorean_simplex.png}
\vspace{-0.2in}
\caption{
Illustration of the information-geometric Pythagorean theorem on the
probability simplex.  The admissible next-step distributions lie
inside the KL ball
\(
\mathcal B_\delta(q_{t-1})
\).
The optimal incremental step \(q_t^\star\) is the projection of the
preferred distribution \(p_{\mathrm{pref}}\) onto this ball.  The
resulting decomposition shows that the reduction in free energy is at
least as large as the informational step size.
}
\label{fig:pythagorean}
\vspace{-0.2in}
\end{figure}
\end{comment}

\begin{theorem}[Information-geometric Pythagorean theorem]
\label{thm:pyth}
Let \(q_t^\star\) be defined by \eqref{eq:optimal_projection}.  Then
\begin{equation}
D_{\mathrm{KL}}(q_{t-1}\|p_{\mathrm{pref}})
-
D_{\mathrm{KL}}(q_t^\star\|p_{\mathrm{pref}})
\ge
D_{\mathrm{KL}}(q_{t-1}\|q_t^\star).
\label{eq:pythagorean}
\end{equation}
\end{theorem}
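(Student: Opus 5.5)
The plan is to deduce the statement from Csiszár's variational inequality for information projections onto convex sets, taking the reference point to be the ball's center \(q_{t-1}\) itself. Write \(C=\mathcal B_\delta(q_{t-1})\) and \(p=p_{\mathrm{pref}}\). Two facts carry the argument.

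\begin{enumerate}[label=(\roman*)]
\item The set \(C\) is convex. This holds because \(q\mapsto D_{\mathrm{KL}}(q_{t-1}\|q)\) is convex: it equals \(-\sum_x q_{t-1}(x)\log q(x)\) plus a constant, and \(-\log\) is convex. The set \(C\) is also closed, and it contains its center \(q_{t-1}\) since \(D_{\mathrm{KL}}(q_{t-1}\|q_{t-1})=0\).
\item A minimizer \(q_t^\star\) exists. The set \(C\) is compact and \(D_{\mathrm{KL}}(\cdot\|p)\) is lower semicontinuous. We may assume \(D_{\mathrm{KL}}(q_{t-1}\|p)<\infty\), since otherwise \eqref{eq:pythagorean} is trivial. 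In that case the minimum value is finite, so \(\mathrm{supp}\,q_t^\star\subseteq\mathrm{supp}\,p\).
\end{enumerate}

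\textbf{Key step: first-order optimality.} Fix any \(r\in C\); we will take \(r=q_{t-1}\). For \(\lambda\in[0,1]\), the mixture \(q_\lambda=(1-\lambda)q_t^\star+\lambda r\) lies in \(C\) by convexity. Hence \(\phi(\lambda)=D_{\mathrm{KL}}(q_\lambda\|p)\) is minimized at \(\lambda=0\), so its right derivative there is nonnegative. Differentiating \(\sum_x q_\lambda\log(q_\lambda/p)\) gives
\begin{equation*}
\phi'(0^+)=\sum_x \bigl(r(x)-q_t^\star(x)\bigr)\log\frac{q_t^\star(x)}{p(x)}\;\ge\;0 .
\end{equation*}
The \(+1\) term from differentiating \(q\log q\) cancels because both \(r\) and \(q_t^\star\) sum to one.

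\textbf{Closing the argument.} For \(r=q_{t-1}\), expand the three divergences:
\begin{equation*}
D_{\mathrm{KL}}(r\|p)-D_{\mathrm{KL}}(r\|q_t^\star)-D_{\mathrm{KL}}(q_t^\star\|p)
=\sum_x r(x)\log\frac{q_t^\star(x)}{p(x)}-\sum_x q_t^\star(x)\log\frac{q_t^\star(x)}{p(x)} .
\end{equation*}
This is exactly \(\phi'(0^+)\), hence nonnegative. Rearranging yields \eqref{eq:pythagorean}.

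\textbf{Main obstacle: supports.} The differentiation step needs care on the boundary of the simplex.
\begin{itemize}
\item The derivative of \(u\log u\) at \(u=0\) is \(-\infty\). If some \(x\) had \(q_t^\star(x)=0<r(x)\), then \(\phi'(0^+)\) would be \(-\infty\), contradicting minimality. For \(r=q_{t-1}\) this case is also excluded directly, because \(D_{\mathrm{KL}}(q_{t-1}\|q_t^\star)\le\delta<\infty\) forces \(\mathrm{supp}\,q_{t-1}\subseteq\mathrm{supp}\,q_t^\star\).
\item Points where \(q_t^\star(x)=r(x)=0\) contribute nothing to any of the sums.
\item On \(\mathrm{supp}\,q_t^\star\subseteq\mathrm{supp}\,p\) all logarithms are finite, and \(\phi\) is differentiable from the right at \(0\). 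We justify this either by dominated convergence on the finite alphabet, or by the convexity of \(\phi\), which makes the difference quotients monotone.
\end{itemize}
With these support conditions checked, the three-term identity is legitimate and the inequality follows.
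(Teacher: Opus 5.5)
Your proposal is correct and follows essentially the paper's argument. The paper simply invokes Csiszár's Pythagorean inequality for the information projection onto the convex set $\mathcal B_\delta(q_{t-1})$, evaluated at the point $q_{t-1}$ of that set, while you supply the standard first-order-optimality proof of that inequality and handle the support issues carefully.

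One small correction: when $D_{\mathrm{KL}}(q_{t-1}\|p_{\mathrm{pref}})=\infty$ the statement is not trivially true but ill-posed. In that case every $q$ in the ball must charge a point where $p_{\mathrm{pref}}$ vanishes, so the left-hand side becomes $\infty-\infty$. Finiteness (for example, full support of $p_{\mathrm{pref}}$, as the paper assumes elsewhere) should simply be imposed as a hypothesis.
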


\begin{proof}
The result follows directly from the Pythagorean theorem for Bregman
divergences applied to the KL divergence.  Since the KL ball
\(
\mathcal B_\delta(q_{t-1})
\)
is convex and \(q_t^\star\) is the information projection of
\(p_{\mathrm{pref}}\) onto this set, we have
\[
D_{\mathrm{KL}}(q_{t-1}\|p_{\mathrm{pref}})
\ge
D_{\mathrm{KL}}(q_{t-1}\|q_t^\star)
+
D_{\mathrm{KL}}(q_t^\star\|p_{\mathrm{pref}}),
\]
which immediately yields \eqref{eq:pythagorean}.
\end{proof}

Equation \eqref{eq:pythagorean} has a direct physical interpretation.
The quantity
\(
D_{\mathrm{KL}}(q_{t-1}\|p_{\mathrm{pref}})
-
D_{\mathrm{KL}}(q_t^\star\|p_{\mathrm{pref}})
\)
is precisely the gain in relative free energy achieved by moving from
\(q_{t-1}\) to \(q_t^\star\).  The theorem therefore states that the
gain in free energy is lower bounded by the informational distance
traveled during the step.

This observation immediately implies that large direct jumps toward
the preferred distribution are inefficient.  Instead, the optimal
strategy is to repeatedly move to the best local point within a small
KL neighborhood.  \textit{The smaller the admissible kinetic budget
\(\delta\), the more efficiently free energy is converted into useful
motion toward the preferred state.}

\subsection{Small-Step Limit and Symmetry of KL Divergence}

The theorem above lower bounds the free-energy gain by
\(
D_{\mathrm{KL}}(q_{t-1}\|q_t^\star).
\)
However, our kinetic energy was defined as
\(
K_t
=
D_{\mathrm{KL}}(q_t^\star\|q_{t-1}),
\)
which is the reverse KL divergence.  Since the KL divergence is
asymmetric, the gain is not immediately identical to the kinetic
energy.  Nevertheless, in the infinitesimal regime the two quantities
become asymptotically equivalent. Next, we formalize this statement.

\vspace{-0.05in}

\begin{lemma}[Local symmetry of KL divergence]
\label{lemma:KL_symmetry}
Let \( q_t^\star(x) = q_{t-1}(x)+\varepsilon_x\), with $\sum_x \varepsilon_x =0$. Then,
%with \(\|\varepsilon\|\) sufficiently small.
\begin{equation}
D_{\mathrm{KL}}(q_t^\star\|q_{t-1})
=
D_{\mathrm{KL}}(q_{t-1}\|q_t^\star)
+
O(\|\varepsilon\|^3).
\label{eq:kl_symmetry}
\end{equation}
Further, let us impose an upper bound on the kinetic energy: $D(g^*_t || q_{t-1})\leq \delta$ and define \( q_{\min}\triangleq \min_{x\in\mathcal X}\ q_{t-1}(x)>0. \) If $q_{\min}\geq \sqrt{8\delta}$, then 
\begin{equation}
D_{\mathrm{KL}}(q_t^\star\|q_{t-1})
=
D_{\mathrm{KL}}(q_{t-1}\|q_t^\star)
+
O(\delta^{3/2}).
\label{eq:kl_symmetry_delta}
\end{equation}
\end{lemma}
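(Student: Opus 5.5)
The plan is to Taylor-expand both divergences in the perturbation $\varepsilon$ around $q_{t-1}$, show that they agree through second order, and then convert the cubic remainder into a power of $\delta$ using the kinetic constraint together with the assumption $q_{\min}\ge\sqrt{8\delta}$. Throughout, write $p=q_{t-1}$ and $r_x=\varepsilon_x/p(x)$, so that $q_t^\star(x)=p(x)(1+r_x)$ and $\sum_x p(x)r_x=0$.

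For the first claim, I would expand $D_{\mathrm{KL}}(q_t^\star\|p)=\sum_x p(x)(1+r_x)\log(1+r_x)$ using $(1+r)\log(1+r)=r+\tfrac{r^2}{2}-\tfrac{r^3}{6}+O(r^4)$. Similarly, $D_{\mathrm{KL}}(p\|q_t^\star)=-\sum_x p(x)\log(1+r_x)$, and $-\log(1+r)=-r+\tfrac{r^2}{2}-\tfrac{r^3}{3}+O(r^4)$. In both expansions the linear terms vanish because $\sum_x p(x) r_x=\sum_x\varepsilon_x=0$, and both quadratic terms equal $\tfrac12\sum_x \varepsilon_x^2/p(x)$, which is the Fisher--Rao quadratic form. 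The difference is therefore
\[
\tfrac16\sum_x \frac{\varepsilon_x^3}{p(x)^2}+O\Big(\sum_x \frac{\varepsilon_x^4}{p(x)^3}\Big).
\]
Since $p$ is fixed with $q_{\min}>0$, this is $O(\|\varepsilon\|^3)$. To make it rigorous I would use Taylor's theorem with Lagrange remainder on the scalar functions, which is valid once $|r_x|\le 1/2$.

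For the second claim, the key is to turn the constraint $D_{\mathrm{KL}}(q_t^\star\|p)\le\delta$ into the uniform bound $|r_x|\le 1/2$, and then track constants. By Pinsker's inequality, $\|\varepsilon\|_1\le\sqrt{2\delta}$, so $|\varepsilon_x|\le\sqrt{2\delta}$ for each $x$. Hence
\[
|r_x|\le \frac{\sqrt{2\delta}}{q_{\min}}\le\frac{\sqrt{2\delta}}{\sqrt{8\delta}}=\frac12 .
\]
This is exactly where the hypothesis $q_{\min}\ge\sqrt{8\delta}$ enters. With $|r_x|\le 1/2$, the remainders of the two scalar expansions are bounded by absolute constants times $|r_x|^3$. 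The difference is then bounded by $C\sum_x p(x)|r_x|^3\le C\max_x|r_x|\cdot\sum_x p(x)r_x^2$.

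It remains to bound the two factors in terms of $\delta$. On $|r|\le1/2$ we have $(1+r)\log(1+r)-r\ge c\,r^2$ for an absolute constant $c>0$, and summing against $p$ gives $\sum_x p(x) r_x^2\le \delta/c$, i.e. the $\chi^2$ divergence is $O(\delta)$. For the other factor, $\max_x|r_x|\le\sqrt{2\delta}/q_{\min}$. The difference is therefore $O(\delta^{3/2}/q_{\min})$, which gives \eqref{eq:kl_symmetry_delta} with $q_{t-1}$ treated as fixed.

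\textbf{Main obstacle.} The delicate step is the uniformity of the constants. The bound is really $O(\delta^{3/2}/q_{\min})$, and under only $q_{\min}\ge\sqrt{8\delta}$ this degrades to $O(\delta)$. So I would state explicitly that the implied constant in \eqref{eq:kl_symmetry_delta} depends on $q_{\min}$, i.e. the estimate holds for fixed $q_{t-1}$ as $\delta\to0$. The role of the $\sqrt{8\delta}$ condition is then only to guarantee $|r_x|\le 1/2$, so that the Taylor remainders are valid. Finally, the typo $g^*_t$ in the statement should be read as $q_t^\star$.
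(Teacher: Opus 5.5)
Your proposal is correct and follows the paper's proof essentially step for step. Both arguments Taylor-expand to show that the quadratic (Fisher--Rao) terms coincide, use Pinsker together with $q_{\min}\ge\sqrt{8\delta}$ to force $|r_x|\le 1/2$, apply a cubic remainder bound valid on $|r|\le 1/2$, and use a quadratic lower bound on the KL divergence to get $\sum_x q_{t-1}(x)r_x^2\le 3\delta$.

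Your last step, $\sum_x p(x)|r_x|^3\le\max_x|r_x|\cdot\sum_x p(x)r_x^2=O(\delta^{3/2}/q_{\min})$, is more careful than the paper's. The paper instead asserts the $q_{\min}$-free inequality $\sum_x q_{t-1}(x)|u_x|^3\le\bigl(\sum_x q_{t-1}(x)u_x^2\bigr)^{3/2}$, which is the reverse of Lyapunov's inequality and false in general. Your caveat is therefore right: the implied constant depends on $q_{\min}$, so the estimate holds for fixed $q_{t-1}$ as $\delta\to 0$. Likewise, your pointwise bound $(1+r)\log(1+r)-r\ge c\,r^2$ correctly keeps the $-r$ term that the paper's pointwise version omits.
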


\begin{proof}
See Appendix~\ref{sec:proof_lemma_KL_sym}.
\end{proof}

The significance of Lemma~\ref{lemma:KL_symmetry} is substantial. The kinetic energy and
the free-energy gain become identical to second order in the
infinitesimal limit.  Consequently, as \(\delta\to 0\),
\(
D_{\mathrm{KL}}(q_t^\star\|q_{t-1})
\approx
D_{\mathrm{KL}}(q_{t-1}\|q_t^\star),
\)
and therefore the free-energy gain per step becomes asymptotically
lower bounded by the kinetic budget itself:
%\begin{equation}
\( D_{\mathrm{KL}}(q_{t-1}\|p_{\mathrm{pref}})
-
D_{\mathrm{KL}}(q_t^\star\|p_{\mathrm{pref}})
\gtrsim
\delta. \)
%\label{eq:gain_delta}
%\end{equation}

This result establishes an important physical interpretation of the
proposed framework.  In the infinitesimal regime, the reduction in
free energy induced by an optimal step becomes equal to the kinetic
energy expended during that step.  The system therefore evolves along
a path in which informational motion and free-energy dissipation are
locally balanced. This balance forms the basis for the least-action principle derived
in the next section.  Rather than selecting arbitrary trajectories on
the simplex, the optimal evolution is obtained by accumulating
infinitesimal steps that maximally reduce free energy relative to the
kinetic energy required to realize them.
\section{Problem Statement}
\label{sec:problem}

Starting from an
initial equilibrium distribution \(\pi_0\), the agent seeks to move
toward the preferred distribution \(p_{\mathrm{pref}}\) through a
sequence of constrained updates
\(
\pi_0 \rightarrow q_1 \rightarrow q_2 \rightarrow \cdots \rightarrow q_N.
\)
The information-geometric Pythagorean theorem shows that small local moves are fundamentally more efficient than large direct jumps, while the small-step symmetry of KL divergence establishes that, in the infinitesimal regime, the gain in free energy becomes locally balanced by the kinetic energy expended during the step.

Accordingly, we impose the per-step constraint $K(q_t,q_{t-1}) \leq \delta$ for $t=1,\dots,N$,
\begin{comment}
\begin{equation}
K(q_t,q_{t-1}) =
D_{\mathrm{KL}}(q_t\|q_{t-1})
\le \delta,
\qquad t=1,\dots,N,
\label{eq:kinetic_constraint}
\end{equation}
\end{comment}
where \(\delta>0\) determines the maximum allowable motion in a single step. Geometrically, this constraint defines a KL ball centered at the previous distribution \(q_{t-1}\). 

Now, we define a stopping criterion\footnote{We use a different notation from Part I to reflect the direction of the construction: whereas Part I builds the trajectory backward from the preferred distribution toward the initial reference state, the present paper evolves the distribution forward from the initial state toward the preferred distribution.} to reach the preferred distribution. 
\begin{equation}
N_\delta
=
\inf
\left\{
t\ge 1:
D_{\mathrm{KL}}(p_{\mathrm{pref}}\|q_{t-1})\le \delta
\right\}.
\label{eq:stopping_time}
\end{equation}
The interpretation of \eqref{eq:stopping_time} is simple.  The
sequence of greedy least-action steps continues as long as the
preferred distribution lies outside the admissible KL ball centered at
the current distribution.  Once
\(
p_{\mathrm{pref}}\in
\mathcal B_\delta(q_{N_\delta-1}),
\)
the destination becomes reachable in a single step, and we terminate
the trajectory by setting
%\begin{equation}
\( q_{N_\delta}=p_{\mathrm{pref}} \).
%\label{eq:terminal_condition}
%\end{equation}
This stopping rule preserves the local least-action structure while guaranteeing exact arrival at the preferred distribution whenever it becomes admissible under the kinetic-energy budget. Next, we provide the main problem statement.

\vspace{-0.05in}

\begin{definition}[Main Problem: Greedy least-action step]
Given the current distribution \(q_{t-1}\), for all $t\leq N_\delta$, the next point \(q_t\) is chosen as the solution of
\begin{align}
\label{eq:greedy_problem-obj}
q_t^\star
&=
\underset{q\in\Delta(\mathcal X)}{\text{argmax}} \ ~
P_{\mathrm{rel}}^{\mathrm{pref}}(q\|q_{t-1}) \\
&\text{subject to} ~\
D_{\mathrm{KL}}(q\|q_{t-1})\le \delta.
\label{eq:greedy_problem-constraint}
\end{align}
\end{definition}
The information-geometric Pythagorean theorem shows that
sufficiently small incremental moves are more efficient than large
global jumps. The small-step limit reveals that free-energy gain and kinetic
energy become locally balanced. The optimal path is therefore constructed as a sequence of
locally optimal least-action steps, each maximizing potential gain
under a local kinetic constraint. The process terminates once the preferred distribution becomes
reachable in a single admissible step. Hence, \textbf{the action} defined as the trajectory from \(\pi_0\) to
\(p_{\mathrm{pref}}\) emerges as a
repeated application of a local information-theoretic least-action
principle, leading to the least ``cumulative'' action at the end of the process.
\section{Solution: Lagrangian and the Principle of Least Action}
\label{sec:greedy}

\subsection{Lagrangian and Least Action}

Let $\lambda_t\geq 0$ denote the Lagrange multiplier associated with
the kinetic-energy constraint. The Lagrangian for this maximization
problem is
\begin{equation}\label{eq:lagrangian}
\mathcal{L}_t(q,\lambda_t)
=
P_{\mathrm{rel}}^{\mathrm{pref}}(q\|q_{t-1})
-
\lambda_t
\left[
D_{\mathrm{KL}}(q\|q_{t-1})-\delta
\right].
\end{equation}
The additive term $\lambda_t\delta$ does not depend on $q$.
Consequently, for a fixed multiplier $\lambda_t$, maximizing
\eqref{eq:lagrangian} over $q$ is equivalent to minimizing
\begin{equation}\label{eq:instantaneous_action}
\mathcal{A}_t(q;q_{t-1})
\triangleq
\lambda_t
\underbrace{D_{\mathrm{KL}}(q\|q_{t-1})}_{\text{kinetic energy}} ~ ~
-
\underbrace{
P_{\mathrm{rel}}^{\mathrm{pref}}(q\|q_{t-1})
}_{\text{relative potential-energy}} .
\end{equation}
We refer to $\mathcal{A}_t$ as the instantaneous
information-geometric action \cite{wibisono2016variational,chirco2020lagrangian}. It has the same kinetic-minus-potential
algebraic structure as the classical Lagrangian: the first term penalizes
distributional motion, while the second rewards progress in the potential
landscape induced by the preferred distribution.

Thus, each step balances two competing effects. The kinetic term resists
abrupt changes from the current belief state, whereas the potential
term drives the distribution toward states favored by
$p_{\mathrm{pref}}$. 
%The action in~\eqref{eq:instantaneous_action} is dimensionless under our normalization. Multiplication by $k_B T$ assigns the two terms physical energy units; if a physical duration $\Delta t$ is associated with each update, then $k_B T\,\Delta t\,\mathcal{A}_t$ has the units of mechanical action. 
Note that, the term ``least action'' refers here to the variational structure of distributional motion on the probability simplex, rather than to an assertion of microscopic Newtonian dynamics.

%The resulting path construction has a natural physical interpretation. The original global problem identifies the preferred terminal distribution but does not determine the local mechanics of motion. The information-geometric Pythagorean theorem shows that sufficiently small incremental moves are more efficient than large global jumps. The small-step limit reveals that free-energy gain and kinetic energy become locally balanced. The optimal path is therefore constructed as a sequence of locally optimal least-action steps, each maximizing potential gain under a local kinetic constraint. The process terminates once the preferred distribution becomes reachable in a single admissible step.
\subsection{Least-Action Path: Tilted Distribution Dynamics}
\label{sec:solution}

We now solve the local least-action problem derived in the previous
section. Using
\begin{equation}
P_{\mathrm{rel}}^{\mathrm{pref}}(q\|q_{t-1})
=
H(q_{t-1}\|p_{\mathrm{pref}})
-
H(q\|p_{\mathrm{pref}}),
\end{equation}
where the first term is constant with respect to $q$, the action can
equivalently be written, up to an additive constant independent of $q$,
as
\begin{equation}\label{eq:action_cross_entropy}
\mathcal{A}_t(q;q_{t-1})
\equiv
H(q\|p_{\mathrm{pref}})
+
\lambda_t D_{\mathrm{KL}}(q\|q_{t-1}).
\end{equation}
Hence, the greedy least-action step is characterized by
\begin{equation}\label{eq:least_action_step}
q_t^\star
= \underset{q\in\Delta(\mathcal X)}{\text{argmin}} ~
\left\{
H(q\|p_{\mathrm{pref}})
+
\lambda_t D_{\mathrm{KL}}(q\|q_{t-1})
\right\},
\end{equation}
which means that the optimal next distribution is found as the outcome of the \textbf{least action}. For the non-terminal greedy steps, when
$p_{\mathrm{pref}}$ lies outside the admissible KL ball, the kinetic
constraint is active and $\lambda_t>0$ is selected so that
\[ D_{\mathrm{KL}}(q_t^\star\|q_{t-1})=\delta. \]
Once the preferred distribution becomes reachable within the kinetic
budget, the stopping rule is invoked and the trajectory terminates at
$p_{\mathrm{pref}}$. Problem~\eqref{eq:least_action_step} is structurally identical to a per-step \emph{fully probabilistic design} problem~\cite{KARNY2006259,KARNY19961719} and to the KL-regularized decision problem of information-theoretic \emph{bounded rationality}~\cite{ortega2013thermodynamics}, both of which yield exponentially tilted decision rules of the same form as the update in Theorem~\ref{thm:tilted}. The distinctive contribution of the present work is embedding this local update into a sequential least-action geometry, where the Pythagorean inequality (Theorem~\ref{thm:pyth}) guarantees the greedy step is globally optimal under the kinetic constraint.

The following theorem gives the closed-form solution.

\vspace{-0.05in}

\begin{theorem}[Tilted distribution update]
\label{thm:tilted}
Let \(q_{t-1} \) %\in\Delta(\mathcal X)\)
and \(p_{\mathrm{pref}}\) be the full-support current and preferred distribution, respectively. The solution of \eqref{eq:least_action_step} is:% identical to the previous distribution, tilted by the preferred distribution:
\begin{equation}
q_t^\star(x)
=
\frac{
q_{t-1}(x)\,p_{\mathrm{pref}}(x)^{1/\lambda}
}{
\sum_{y\in\mathcal X}
q_{t-1}(y)\,p_{\mathrm{pref}}(y)^{1/\lambda}
}.
\label{eq:tilted_distribution}
\end{equation}
\end{theorem}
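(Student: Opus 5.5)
We minimize the convex objective in \eqref{eq:least_action_step} over the simplex for a fixed multiplier $\lambda=\lambda_t>0$. Rather than using Lagrange multipliers and KKT conditions, we rewrite the objective, up to an additive constant, as a single KL divergence; the minimizer can then be read off from Gibbs' inequality.

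Expanding both terms gives
\[
H(q\|p_{\mathrm{pref}})+\lambda D_{\mathrm{KL}}(q\|q_{t-1})
=\sum_x q(x)\Big[\lambda\log q(x)-\lambda\log q_{t-1}(x)-\log p_{\mathrm{pref}}(x)\Big].
\]
Factoring out $\lambda$, the bracket equals $\lambda\log\big(q(x)/(q_{t-1}(x)p_{\mathrm{pref}}(x)^{1/\lambda})\big)$.

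Define the normalizer $Z_\lambda=\sum_y q_{t-1}(y)p_{\mathrm{pref}}(y)^{1/\lambda}$ and let $\tilde q$ be the right-hand side of \eqref{eq:tilted_distribution}. Because $q_{t-1}$ and $p_{\mathrm{pref}}$ have full support, $Z_\lambda>0$ and $\tilde q$ is a well-defined, full-support element of $\Delta(\mathcal X)$. Substituting $q_{t-1}(x)p_{\mathrm{pref}}(x)^{1/\lambda}=Z_\lambda\tilde q(x)$ and using $\sum_x q(x)=1$, the objective becomes
\[
\lambda D_{\mathrm{KL}}(q\|\tilde q)-\lambda\log Z_\lambda .
\]

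The term $-\lambda\log Z_\lambda$ does not depend on $q$, and $\lambda>0$. Hence minimizing \eqref{eq:least_action_step} over $\Delta(\mathcal X)$ is equivalent to minimizing $D_{\mathrm{KL}}(q\|\tilde q)$. By Gibbs' inequality this divergence is nonnegative and vanishes only at $q=\tilde q$. Therefore $\tilde q$ is the unique minimizer, which is exactly \eqref{eq:tilted_distribution}.

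Two technical points need care. First, $q$ may lie on the boundary of the simplex. This is handled by the convention $0\log 0=0$; since $\tilde q$ has full support, $D_{\mathrm{KL}}(q\|\tilde q)$ stays finite, and no interior or KKT assumption is needed. Second, the multiplier itself must be chosen for each non-terminal step. Consider the map $\lambda\mapsto D_{\mathrm{KL}}(\tilde q_\lambda\|q_{t-1})$. It tends to $0$ as $\lambda\to\infty$, it tends to $D_{\mathrm{KL}}(p_{\mathrm{pref}}\|q_{t-1})$ at $\lambda=1$, and it is continuous and monotone along the exponential family. By the intermediate value theorem, a $\lambda_t$ with $D_{\mathrm{KL}}(\tilde q_{\lambda_t}\|q_{t-1})=\delta$ therefore exists whenever the stopping rule has not yet triggered.

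I expect the main obstacle to be the monotonicity claim in the second point. The natural route is to show that the derivative in $1/\lambda$ equals a variance, which is nonnegative. The closed form itself, however, follows directly from the completion-of-KL argument above.
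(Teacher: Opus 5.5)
Your derivation of the closed form is correct and follows the paper's proof (Appendix B) step for step. The paper also merges the two terms into $\lambda_t\sum_x q(x)\log\frac{q(x)}{q_{t-1}(x)p_{\mathrm{pref}}(x)^{1/\lambda_t}}$, normalizes to $\bar q_t$, and rewrites the action as $\lambda_t D_{\mathrm{KL}}(q\|\bar q_t)-\lambda_t\log Z_t(\lambda_t)$. It then concludes by Gibbs' inequality. Your treatment of boundary points via $0\log 0=0$ is a harmless addition that the paper leaves implicit.

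Your second technical point is not needed, since the theorem fixes the multiplier. As written, it is also wrong.
\begin{itemize}
\item \emph{The value at $\lambda=1$.} Here the tilt is $\tilde q_1\propto q_{t-1}p_{\mathrm{pref}}$. This equals $p_{\mathrm{pref}}$ only when $q_{t-1}$ is uniform, so the map does not take the value $D_{\mathrm{KL}}(p_{\mathrm{pref}}\|q_{t-1})$ there.
\item \emph{Monotonicity is the easy part.} With $\beta=1/\lambda$ you get $\frac{d}{d\beta}D_{\mathrm{KL}}(\tilde q_\beta\|q_{t-1})=\beta\,\mathrm{Var}_{\tilde q_\beta}(\log p_{\mathrm{pref}})\ge 0$.
\item \emph{The range is the real obstruction.} As $\lambda\to 0$ the tilt concentrates on $M=\arg\max_x p_{\mathrm{pref}}(x)$. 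So the map is bounded above by $-\log q_{t-1}(M)$, and that bound can be below $\delta$ even though the stopping rule has not fired.
\end{itemize}

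A concrete counterexample: take $p_{\mathrm{pref}}=(0.6,0.4)$, $q_{t-1}=(0.99,0.01)$ and $\delta=0.1$.
\begin{itemize}
\item $D_{\mathrm{KL}}(p_{\mathrm{pref}}\|q_{t-1})\approx 1.18>\delta$, so the stopping rule has not fired.
\item Every tilt, however, satisfies $D_{\mathrm{KL}}(\tilde q_\lambda\|q_{t-1})<-\log 0.99\approx 0.01<\delta$.
\item The constrained maximizer of $P_{\mathrm{rel}}^{\mathrm{pref}}$ is therefore the point mass on $M$, which lies inside the ball. The constraint is inactive, and no $\lambda_t>0$ with an active constraint exists.
\end{itemize}
The paper asserts the existence of such a $\lambda_t$ without proof. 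You should either drop that paragraph or restrict it to the case $\delta<-\log q_{t-1}(M)$, where your monotonicity and intermediate-value argument does go through.
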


\begin{proof}
See Appendix~\ref{sec:proof_thm_tilted}.
\end{proof}

%\subsection{Interpretation}
\vspace{-0.1in}

\noindent \textbf{Interpretation:} Solution given in \eqref{eq:tilted_distribution} shows that the optimal next distribution is obtained by a multiplicative deformation of the current distribution:
\( q_t^\star(x)
\propto
q_{t-1}(x)p_{\mathrm{pref}}(x)^{1/\lambda}.
\)
Thus, the preferred distribution acts as an exponential tilt.  States
with large \(p_{\mathrm{pref}}(x)\) are amplified, while states with
small \(p_{\mathrm{pref}}(x)\) are suppressed.  The multiplier
\(\lambda\) controls the strength of the tilt.
When \(\lambda\) is large,
\(
p_{\mathrm{pref}}(x)^{1/\lambda}\approx 1,
\)
so the update is conservative and \(q_t^\star\) remains close to
\(q_{t-1}\).  When \(\lambda\) is small, the tilt becomes stronger and
the distribution moves more aggressively toward the preferred state.
In this sense, \(\lambda\) plays the role of an inverse-temperature-like
parameter governing the tradeoff between kinetic resistance and
potential attraction.

\begin{figure}[h]
\centering
\includegraphics[width=0.48\columnwidth]{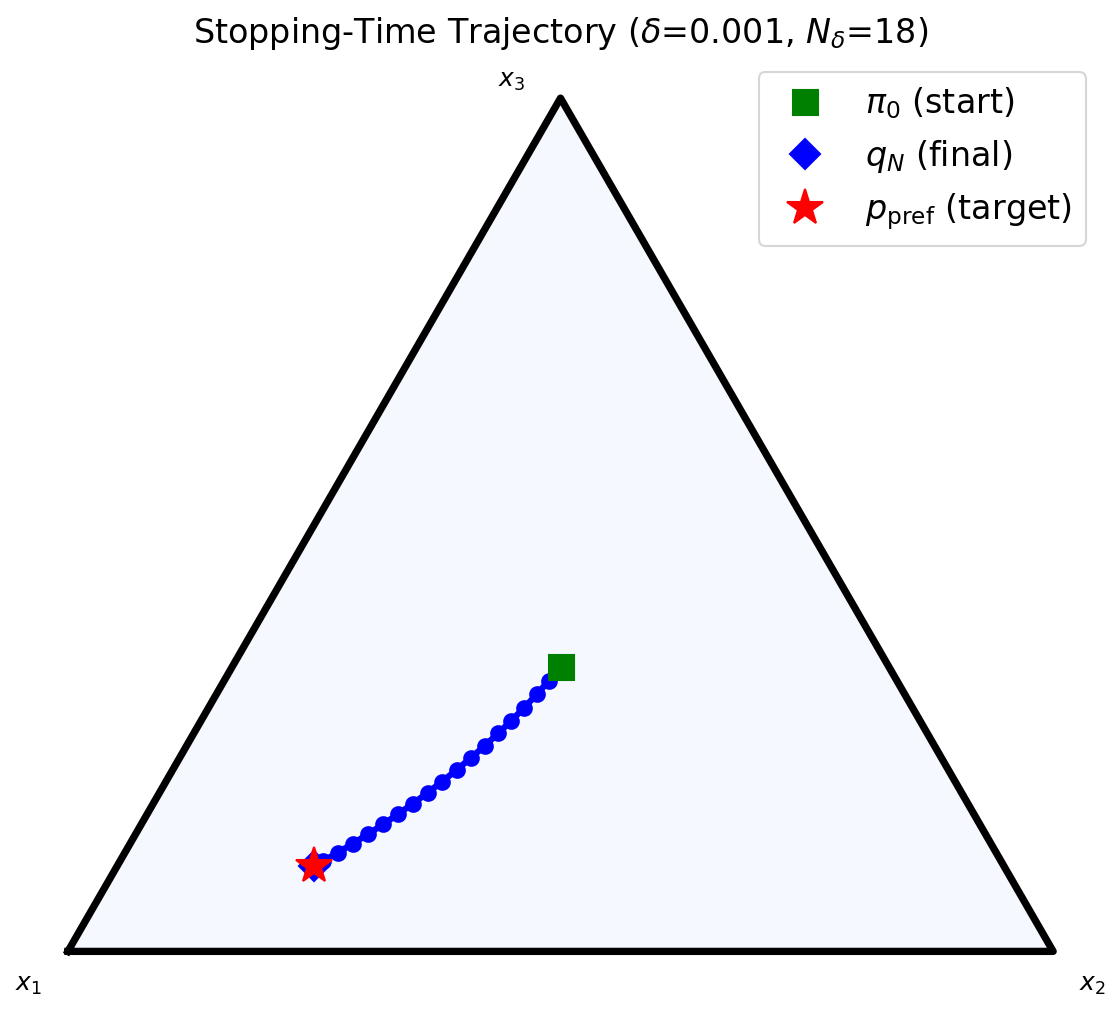}
\includegraphics[width=0.48\columnwidth]{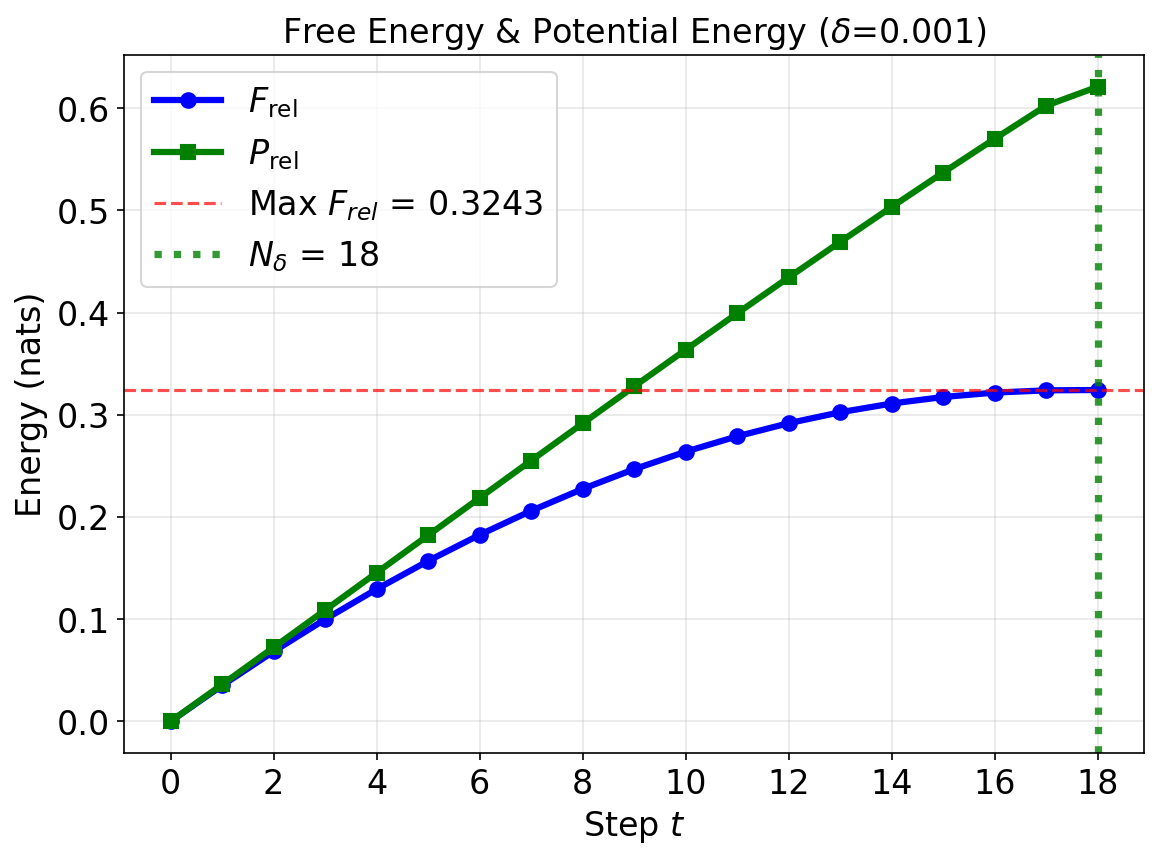}\\[4pt]
\includegraphics[width=0.48\columnwidth]{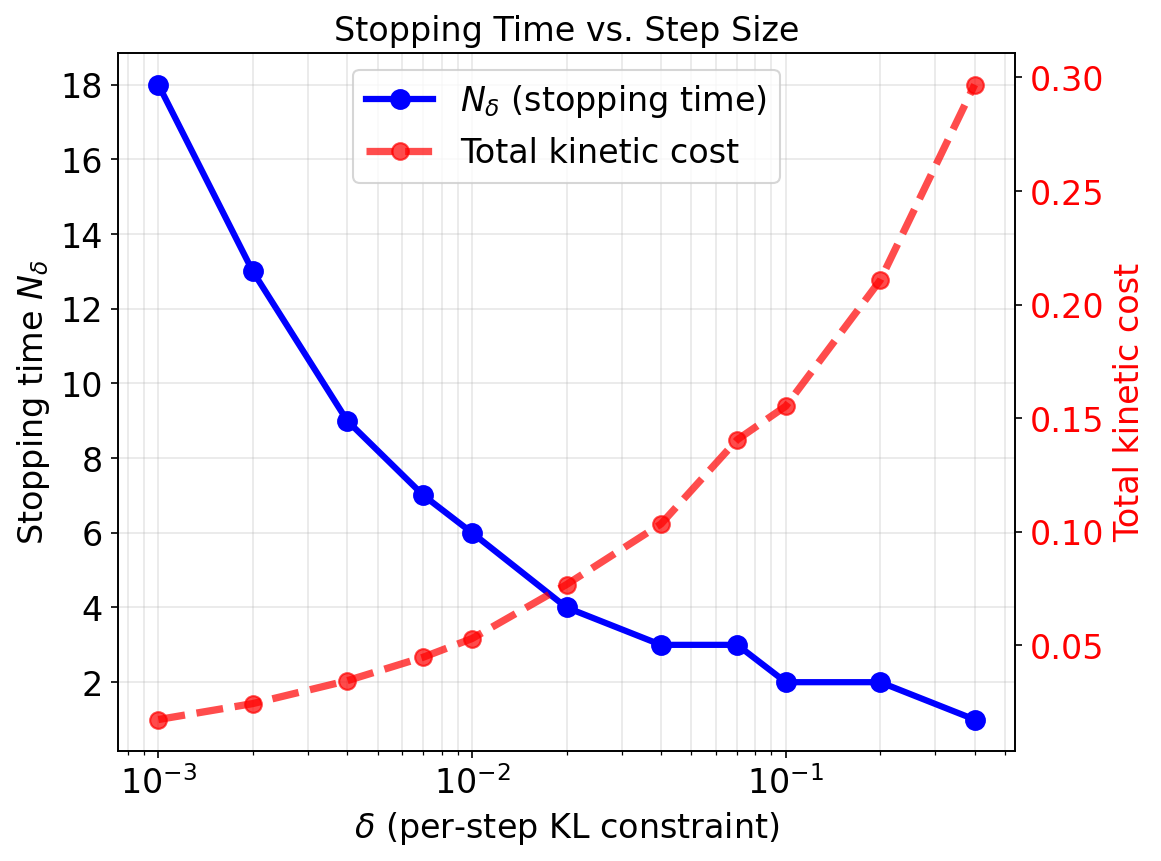}
\includegraphics[width=0.48\columnwidth]{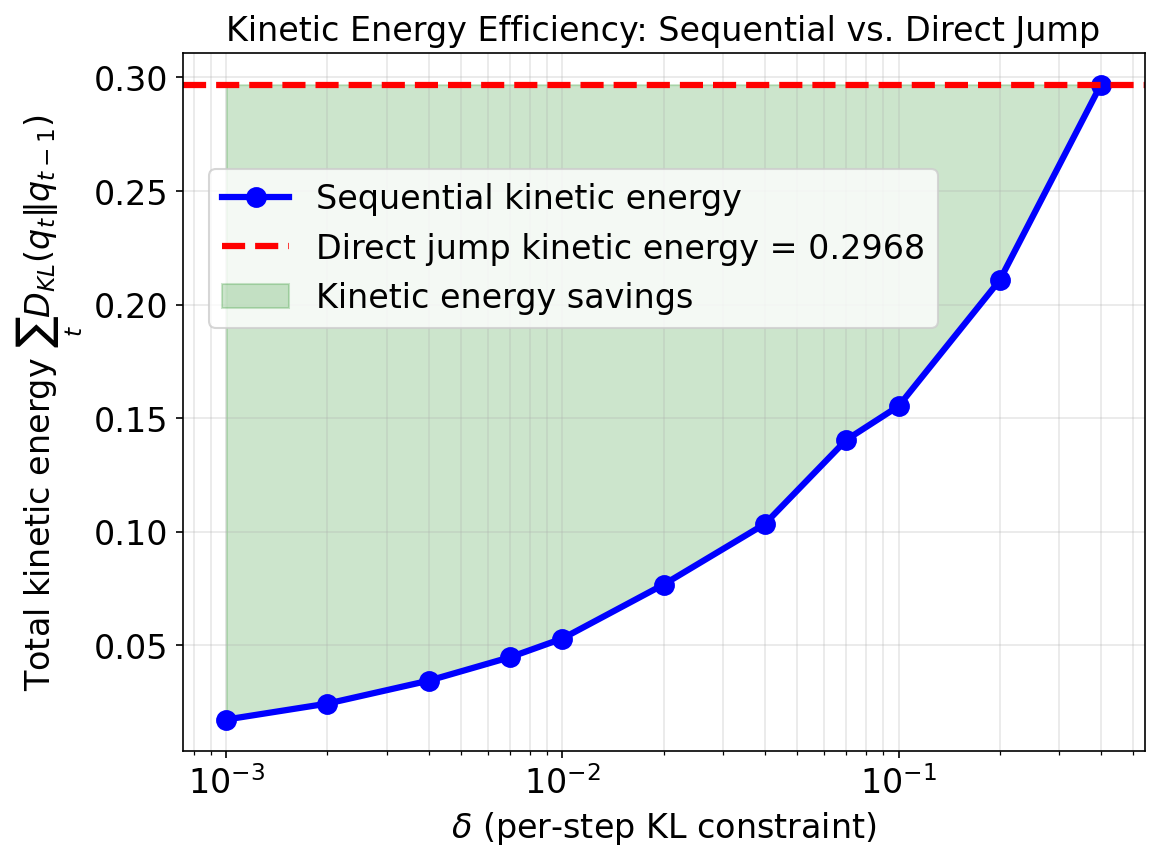}
\vspace{-0.1in}
\caption{Numerical results for $\pi_0 = (1/3, 1/3, 1/3)$, $p_{\mathrm{pref}} = (0.7, 0.2, 0.1)$. \textbf{Top-left:} Optimal trajectory on the $2$-simplex with $\delta = 0.001$ ($N_\delta = 18$ steps). \textbf{Top-right:} Evolution of relative free energy $F_{\mathrm{rel}}$ and relative potential energy $P_{\mathrm{rel}}$ along the path. \textbf{Bottom-left:} Stopping time $N_\delta$ and total kinetic cost as functions of the step-size constraint $\delta$. \textbf{Bottom-right:} Total kinetic energy of the sequential strategy versus the direct-jump cost $D_{\mathrm{KL}}(p_{\mathrm{pref}} \| \pi_0)$.}
\vspace{-0.2in}
\label{fig:numeric}
\end{figure}

\subsection{Numerical Illustration}
\label{sec:numeric}

We illustrate the stopping-time formulation with a concrete example on the $2$-simplex $\Delta(\mathcal{X})$ with $|\mathcal{X}|=3$. Let the initial equilibrium distribution be uniform, $\pi_0 = (1/3,\, 1/3,\, 1/3)$, and the preferred distribution be $p_{\mathrm{pref}} = (0.7,\, 0.2,\, 0.1)$. The initial KL divergences are $D_{\mathrm{KL}}(\pi_0 \| p_{\mathrm{pref}}) = 0.324$ nats and $D_{\mathrm{KL}}(p_{\mathrm{pref}} \| \pi_0) = 0.297$ nats. We sweep the kinetic constraint over $\delta \in [0.001, 0.4]$ and apply the stopping-time planner (Definition~4, Eq.~13). In all cases the trajectory terminates exactly at $p_{\mathrm{pref}}$.

\noindent \textbf{Trajectory and energy evolution:}
Figure~\ref{fig:numeric} (top-left) shows the least-action path on the simplex for $\delta = 0.001$. The trajectory traces a smooth curve from the centroid toward the target, consistent with the exponential tilting update of Theorem~2. Figure~\ref{fig:numeric} (top-right) displays the corresponding energy evolution: both $F_{\mathrm{rel}}$ and $P_{\mathrm{rel}}$ increase monotonically toward their maximum values, confirming steady progress toward the preferred distribution at each step.

\noindent \textbf{Effect of step size:}
Figure~\ref{fig:numeric} (bottom-left) reveals the tradeoff between step size and stopping time. As $\delta$ decreases from $0.4$ to $0.001$, the stopping time $N_\delta$ increases from $1$ to $18$, while the total kinetic cost decreases from $0.297$ to $0.017$ nats. This demonstrates that smaller steps are fundamentally more efficient, converting kinetic energy into free-energy reduction with less waste.

\noindent \textbf{Kinetic energy savings:}
Figure~\ref{fig:numeric} (bottom-right) compares the sequential kinetic cost against the direct-jump baseline $D_{\mathrm{KL}}(p_{\mathrm{pref}} \| \pi_0) = 0.297$. For all tested step sizes, the sequential strategy achieves strictly lower total kinetic energy. At $\delta = 0.001$, the savings exceed $94\%$, demonstrating the dominance of incremental motion established by the Pythagorean theorem (Theorem~\ref{thm:pyth}).

\section{Geodesic Cost on Variational Distribution}
\label{sec:geodesic}

\vspace{-0.1in}

The development so far assumes that motion on the probability simplex is
driven purely by the tradeoff between relative potential gain and
kinetic energy.  In many applications, however, there may be an
additional cost associated with occupying specific states or traversing
certain regions of the simplex.  Such costs may arise from energetic
penalties, environmental constraints, risk preferences, or application-
specific geodesic penalties.

In this section, we show that these additional costs can be naturally
incorporated into the framework without changing the essential
structure of the problem.  In particular, the modified optimization
reduces to the original one by replacing the preferred distribution
with an exponentially tilted version.

\subsection{Greedy Least-Action Problem with Geodesic Cost}

Let \(c(x)\ge 0\) denote a state-dependent cost associated with the
system occupying state \(x\in\mathcal X\).  At time \(t\), this induces
the expected state cost

\begin{equation}
\mathbb E_{q}[c(X)]
=
\sum_{x\in\mathcal X} q(x)c(x).
\label{eq:state_cost}
\end{equation}
While this term is included as an expected cost for occupying a state, it can equivalently be interpreted as a geodesic cost penalizing motion through specific regions of the probability simplex. Indeed $c(x)=c'(q(x))$, where $q(\cdot)$ is a distribution (point on the simplex) and $c'(\cdot)$ is the cost associated with that particular distribution. The greedy least-action problem therefore becomes
\begin{equation}
q_t^\star
=
\underset{q\in\Delta(\mathcal X)}{\text{argmin}} ~
\left\{
H(q\|p_{\mathrm{pref}})
+
\gamma \mathbb E_{q}[c(X)]
+
\lambda D_{\mathrm{KL}}(q\|q_{t-1})
\right\},
\label{eq:cost_problem}
\end{equation}
where \(\gamma\ge 0\) controls the importance of the geodesic/state
cost.

\subsection{Reduction to the Original Problem}

Expanding the first two terms,
\begin{comment}
\begin{align}
\label{eq:cost_expansion}
H(q_t\|p_{\mathrm{pref}}) + \gamma \mathbb E_{q_t}[c(X)] &=
%-\sum_x q_t(x)\log p_{\mathrm{pref}}(x) + \gamma \sum_x q_t(x)c(x)
%\nonumber\\
%&=
-\sum_x q_t(x) \left[ \log p_{\mathrm{pref}}(x) \gamma c(x) \right] \\
&=
-\sum_x q_t(x)
\log
\left[
p_{\mathrm{pref}}(x)e^{-\gamma c(x)}
\right].
\label{eq:cost_tilt}
\end{align}
\end{comment}
%Rewriting,
\begin{equation}
H(q\|p_{\mathrm{pref}})
+
\gamma \mathbb E_{q}[c(X)]
=
-\sum_x q(x)
\log
\left[
p_{\mathrm{pref}}(x)e^{-\gamma c(x)}
\right].
\label{eq:cost_tilt}
\end{equation}
\begin{comment}
Then, up to an additive constant independent of \(q_t\),
\begin{equation}
H(q_t\|p_{\mathrm{pref}})
+
\gamma \mathbb E_{q_t}[c(X)]
=
H(q_t\|\widetilde p_{\mathrm{pref}})
+\text{constant}.
\label{eq:equivalent_cross_entropy}
\end{equation}
%This immediately yields the following result.
%\begin{theorem}[Equivalence under geodesic cost]
Thus, 
\end{comment}
Problem \eqref{eq:cost_problem} is equivalent to
\begin{equation}
q_t^\star
=
\underset{q\in\Delta(\mathcal X)}{\text{argmin}} ~
\left\{ H(q\|\widetilde p_{\mathrm{pref}})
+
\lambda D_{\mathrm{KL}}(q\|q_{t-1})
\right\},
\label{eq:equivalent_problem}
\end{equation}
where
\(\widetilde p_{\mathrm{pref}}\)
is given by
%Define the exponentially tilted preferred distribution
\begin{equation}
\widetilde p_{\mathrm{pref}}(x)
\triangleq
\frac{
p_{\mathrm{pref}}(x)e^{-\gamma c(x)}
}{
\sum_{y\in\mathcal X}
p_{\mathrm{pref}}(y)e^{-\gamma c(y)}
}.
\label{eq:tilted_preferred_distribution}
\end{equation}
%\eqref{eq:tilted_preferred_distribution}.
%\end{theorem}
Hence, adding a geodesic cost does not change the basic structure
of the problem.  It simply modifies the preferred distribution through
an exponential tilt.

\subsection{Implications}

This result has a clean interpretation. The original framework attracts the system toward the preferred distribution \(p_{\mathrm{pref}}\).  Adding a state-dependent cost
modifies this attraction by suppressing expensive states through the
factor
\(
e^{-\gamma c(x)}.
\)
Consequently, states with large cost \(c(x)\) become less preferred and the geometry of the path changes through a deformation of the effective preferred distribution.

\begin{comment}
\begin{itemize}

\item states with large cost \(c(x)\) become less preferred,

\item states with low cost become relatively more attractive,

\item the geometry of the path changes through a deformation of the
effective preferred distribution.

\end{itemize}
\end{comment}
Geometrically, the state-dependent cost does not alter the structure of
the least-action path itself; rather, it shifts the effective attracting
point in the simplex from \(p_{\mathrm{pref}}\) to
\(\widetilde p_{\mathrm{pref}}\), causing the trajectory to bend toward
a cost-adjusted destination.
\begin{comment}
As a result, the tilted-distribution solution derived in the previous
section remains unchanged in form.  One simply replaces
\(p_{\mathrm{pref}}\) by
\(\widetilde p_{\mathrm{pref}}\), yielding
\begin{equation}
q_t^\star(x)
=
\frac{
q_{t-1}(x)\widetilde p_{\mathrm{pref}}(x)^{1/\lambda}
}{
\sum_y q_{t-1}(y)\widetilde p_{\mathrm{pref}}(y)^{1/\lambda}
}.
\label{eq:cost_tilted_solution}
\end{equation}
\end{comment}
Thus, the geodesic-cost extension preserves the least-action
structure of the original problem while simply modifying the
destination that attracts the path. In this sense, geodesic costs do not alter the physics of the
information-geometric least-action principle; they merely reshape the
effective potential landscape through exponential tilting of the
preferred distribution.  The structure mirrors entropy-regularized optimal transport~\cite{villani2009optimal,benamou2000computational}, where the optimal coupling similarly takes a Gibbs-type form.
\section{Discussion and Conclusion}

We have developed a variational framework for sequential motion on the
probability simplex, showing that belief evolution can be understood
through a discrete-time principle of least action.  Starting from an
initial Gibbs equilibrium distribution, the system evolves toward a
preferred distribution through a sequence of locally optimal updates
that balance two competing forces: attraction toward preferred states,
captured through relative potential energy, and resistance to abrupt
informational motion, captured through kinetic
energy.  The resulting least-action formulation reveals a thermodynamic
and geometric structure underlying belief evolution on the simplex.

A central result of the paper is that sufficiently small local moves
are fundamentally more efficient than large global jumps.  The
information-geometric Pythagorean theorem shows that the reduction in
free energy obtained through an optimal incremental move dominates the
corresponding informational step size, while the small-step symmetry of
KL divergence establishes that, in the infinitesimal regime, free-energy
reduction and kinetic expenditure become locally balanced.  This leads
naturally to a least-action principle in which optimal motion emerges
as a sequence of locally optimal variational steps.

The resulting update admits a closed-form solution: each new
distribution is obtained by exponentially tilting the current
distribution toward the preferred distribution.  Thus, the least-action
path is not a Euclidean interpolation on the simplex, but rather a
sequence of progressive information-geometric deformations that
continuously reshape the belief state until the preferred distribution
becomes reachable.  This provides a concrete geometric interpretation
of sequential belief updating and connects naturally to active
inference, where preferred outcomes guide belief evolution through
precision-weighted variational updates.

We further showed that state-dependent or geodesic costs can be
incorporated without altering the structure of the least-action
principle.  Such costs simply induce an exponential tilting of the
preferred distribution itself, revealing a structural equivalence
between external geometric penalties and a deformation of the effective
potential landscape.

Taken together, these results establish a physics of information
geometry in which motion on the probability simplex obeys a discrete
least-action principle analogous to classical mechanics, but with free
energy and informational kinetic cost replacing conventional potential
and kinetic terms.  The total action along a trajectory is bounded below by the free-energy difference between endpoints---the information-theoretic analogue of the Jarzynski equality~\cite{jarzynski1997nonequilibrium} and Crooks fluctuation theorem~\cite{crooks1999entropy}.
\begin{comment}
This perspective provides a new geometric and
thermodynamic foundation for belief dynamics, with potential
applications in active inference, variational learning, generative
modeling, and distributional control.

Future work includes extending the framework to partial observability
and imperfect measurements, establishing continuous-time limits through
Fisher--Rao geometry and gradient flows, and exploring empirical
applications in sequential inference and adaptive decision-making.
\end{comment}

\begin{comment}
\subsubsection{\discintname}
The authors have no competing interests to declare that are
relevant to the content of this article.
\end{credits}
\end{comment}

%
% ---- Bibliography ----
%
%\newpage
\bibliographystyle{splncs04}
\bibliography{bibliography}
%
% ---- Appendix ----
%
\begin{appendix}
\section{Proof of Lemma ~\ref{lemma:KL_symmetry}}
\label{sec:proof_lemma_KL_sym}

Using the Taylor expansion
\( 
\log(1+x)
=
x-\frac{x^2}{2}+O(x^3),
\)
we obtain
\begin{align*}
D_{\mathrm{KL}}(q_t^\star\|q_{t-1})
&=
\sum_x
(q_{t-1}(x)+\varepsilon_x)
\log\left(
1+\frac{\varepsilon_x}{q_{t-1}(x)}
\right)
%\\
=
\frac12
\sum_x
\frac{\varepsilon_x^2}{q_{t-1}(x)}
+
O(\|\varepsilon\|^3).
\end{align*}
Similarly,
\begin{align*}
D_{\mathrm{KL}}(q_{t-1}\|q_t^\star)
&=
\sum_x
q_{t-1}(x)
\log\left(
\frac{q_{t-1}(x)}
{q_{t-1}(x)+\varepsilon_x}
\right)
=
\frac12
\sum_x
\frac{\varepsilon_x^2}{q_{t-1}(x)}
+
O(\|\varepsilon\|^3).
\end{align*}
Subtracting the two expansions yields
\eqref{eq:kl_symmetry}. For the second part of the lemma, recall
\(
q_{\min}\triangleq \min_{x\in\mathcal X}q_{t-1}(x)>0.
\)
By Pinsker's inequality and the constraint
\(
D_{\mathrm{KL}}(q_t^\star\|q_{t-1})\le \delta,
\)
we have
\[
\|q_t^\star-q_{t-1}\|_1
\le
\sqrt{2D_{\mathrm{KL}}(q_t^\star\|q_{t-1})}
\le
\sqrt{2\delta}.
\]
If \(\delta\le q_{\min}^2/8\), then
\[
\|\epsilon\|_\infty
\le
\|\epsilon\|_1
\le
\sqrt{2\delta}
\le
\frac{q_{\min}}{2}.
\]
Hence, for every \(x\),
\[
\left|
\frac{\epsilon_x}{q_{t-1}(x)}
\right|
\le
\frac12.
\]
Let
\(
u_x=\frac{\epsilon_x}{q_{t-1}(x)}.
\)
Then
\[
q_t^\star(x)=q_{t-1}(x)(1+u_x),
\qquad
\sum_x q_{t-1}(x)u_x=0.
\]
Now expand the two KL divergences:
\[
D_{\mathrm{KL}}(q_t^\star\|q_{t-1})
=
\sum_x q_{t-1}(x)(1+u_x)\log(1+u_x),
\]
and
\[
D_{\mathrm{KL}}(q_{t-1}\|q_t^\star)
=
-\sum_x q_{t-1}(x)\log(1+u_x).
\]
Therefore,
\begin{equation}
D_{\mathrm{KL}}(q_t^\star\|q_{t-1})
-
D_{\mathrm{KL}}(q_{t-1}\|q_t^\star)
=
\sum_x q_{t-1}(x)(2+u_x)\log(1+u_x).
\end{equation}
Using
\[
\sum_x q_{t-1}(x)u_x=0,
\]
we may subtract the linear term and write
\[
D_{\mathrm{KL}}(q_t^\star\|q_{t-1})
-
D_{\mathrm{KL}}(q_{t-1}\|q_t^\star)
=
\sum_x q_{t-1}(x)\psi(u_x),
\]
where
\(
\psi(u)=(2+u)\log(1+u)-2u.
\)
For \(|u|\le 1/2\), Taylor's theorem gives
\(
|\psi(u)|\le C|u|^3
\)
for some constant \(C\). Hence
\[
\left|
D_{\mathrm{KL}}(q_t^\star\|q_{t-1})
-
D_{\mathrm{KL}}(q_{t-1}\|q_t^\star)
\right|
\le
C\sum_x q_{t-1}(x)|u_x|^3.
\]
Moreover, for \(|u_x|\le 1/2\),
\[
(1+u_x)\log(1+u_x)
\ge
\frac13 u_x^2,
\]
and therefore
\[
D_{\mathrm{KL}}(q_t^\star\|q_{t-1})
\ge
\frac13\sum_x q_{t-1}(x)u_x^2.
\]
Since
\[
D_{\mathrm{KL}}(q_t^\star\|q_{t-1})\le \delta,
\]
we obtain
\[
\sum_x q_{t-1}(x)u_x^2
\le
3\delta.
\]
Finally,
\[
\sum_x q_{t-1}(x)|u_x|^3
\le
\left(
\sum_x q_{t-1}(x)u_x^2
\right)^{3/2}
\le
(3\delta)^{3/2}.
\]
Thus,
\[
\left|
D_{\mathrm{KL}}(q_t^\star\|q_{t-1})
-
D_{\mathrm{KL}}(q_{t-1}\|q_t^\star)
\right|
\le
C(3\delta)^{3/2}.
\]
and
\[
D_{\mathrm{KL}}(q_{t-1}\|q_t^\star)
=
D_{\mathrm{KL}}(q_t^\star\|q_{t-1})
+
O(\delta^{3/2}),
\]
as \(\delta\to 0\).

\section{Proof of Theorem~\ref{thm:tilted}}
\label{sec:proof_thm_tilted}

From~\eqref{eq:action_cross_entropy}
%Starting from \eqref{eq:action_expanded}, combine the two terms:
\begin{align}
\mathcal{A}_t(q_t;q_{t-1})
&=
-\sum_x q_t(x)\log p_{\mathrm{pref}}(x)
+
\lambda_t
\sum_x q_t(x)\log\frac{q_t(x)}{q_{t-1}(x)}
\label{eq:action_expanded} \\
&=
\lambda_t
\sum_x q_t(x)
\log\frac{q_t(x)}{q_{t-1}(x)}
-
\sum_x q_t(x)\log p_{\mathrm{pref}}(x)
\nonumber\\
&=
\lambda_t
\sum_x q_t(x)
\left[
\log q_t(x)
-
\log q_{t-1}(x)
-
\frac{1}{\lambda_t}\log p_{\mathrm{pref}}(x)
\right]
\nonumber\\
&=
\lambda_t
\sum_x q_t(x)
\log
\frac{
q_t(x)
}{
q_{t-1}(x)p_{\mathrm{pref}}(x)^{1/\lambda_t}
}.
\label{eq:combined_terms}
\end{align}
Also defining the unnormalized tilted measure
%\begin{equation}
\( \widetilde q_t(x)
\triangleq
q_{t-1}(x)p_{\mathrm{pref}}(x)^{1/\lambda_t}, \)
%\label{eq:unnormalized_tilt}
%\end{equation}
and its normalizing constant
%\begin{equation}
\( Z_t(\lambda_t)
\triangleq
\sum_{y\in\mathcal X}
q_{t-1}(y)p_{\mathrm{pref}}(y)^{1/\lambda_t}, \)
%\label{eq:normalization_constant}
%\end{equation}
we have the normalized tilted distribution:
\begin{equation}
\bar q_t(x)
=
\frac{\widetilde q_t(x)}{Z_t(\lambda_t)}
=
\frac{
q_{t-1}(x)p_{\mathrm{pref}}(x)^{1/\lambda_t}
}{
\sum_{y\in\mathcal X}
q_{t-1}(y)p_{\mathrm{pref}}(y)^{1/\lambda_t}
}.
\label{eq:normalized_tilt}
\end{equation}

Using \(\widetilde q_t(x)=Z_t(\lambda_t)\bar q_t(x)\), we rewrite
\eqref{eq:combined_terms} as
\begin{align}
\mathcal A_t(q_t;q_{t-1}) &=
\lambda_t \sum_x q_t(x) \log \frac{ q_t(x)}{Z_t(\lambda_t)\bar q_t(x)}
\nonumber\\
&=
\lambda_t \sum_x q_t(x) \log \frac{q_t(x)}{\bar q_t(x)} - \lambda_t \log Z_t(\lambda_t) \sum_x q_t(x)
\nonumber\\
&=
\lambda_t
D_{\mathrm{KL}}(q_t\|\bar q_t)
-
\lambda_t \log Z_t(\lambda_t).
\label{eq:single_kl_form}
\end{align}

The second term in \eqref{eq:single_kl_form} is independent of \(q_t\).  Therefore minimizing \(A_t(q_t;q_{t-1})\) over the simplex is equivalent to minimizing
\( D_{\mathrm{KL}}(q_t\|\bar q_t). \)
Since KL divergence is nonnegative and satisfies
\[ D_{\mathrm{KL}}(q_t\|\bar q_t)=0 \quad
\Longleftrightarrow \quad q_t=\bar q_t, \]
the unique minimizer is \(q_t^\star=\bar q_t\), proving~\eqref{eq:tilted_distribution}.
\end{appendix}
\end{document}